\documentclass[runningheads]{llncs}
\usepackage[T1]{fontenc}
\usepackage{amsmath,amsfonts}

\usepackage{algorithmic}
\usepackage{algorithm}
\usepackage{array}
\usepackage[caption=false,font=normalsize,labelfont=sf,textfont=sf]{subfig}
\usepackage{textcomp}
\usepackage{stfloats}
\usepackage{url}
\usepackage{verbatim}
\usepackage{graphicx}
\usepackage{cite}
\usepackage{caption}
\usepackage{subcaption}

\usepackage{listings}
\lstdefinelanguage{JavaScript}{
  keywords={break, case, catch, continue, debugger, default, delete, do, else, finally, for, function, if, in, instanceof, new, return, switch, this, throw, try, typeof, var, void, while, with},
  morecomment=[l]{//},
  morecomment=[s]{/*}{*/},
  morestring=[b]',
  morestring=[b]",
  sensitive=true
}
\usepackage{xcolor}
\definecolor{codegreen}{rgb}{0,0.6,0}
\definecolor{codegray}{rgb}{0.5,0.5,0.5}
\definecolor{codepurple}{rgb}{0.58,0,0.82}
\definecolor{backcolour}{rgb}{0.95,0.95,0.92}

\lstdefinestyle{mystyle}{
    backgroundcolor=\color{backcolour},   
    commentstyle=\color{codegreen},
    keywordstyle=\color{magenta},
    numberstyle=\tiny\color{codegray},
    stringstyle=\color{codepurple},
    basicstyle=\ttfamily\footnotesize,
    breakatwhitespace=false,         
    breaklines=true,                 
    captionpos=b,                    
    keepspaces=true,                 
    numbers=left,                    
    numbersep=5pt,                  
    showspaces=false,                
    showstringspaces=false,
    showtabs=false,                  
    tabsize=2
}

\colorlet{punct}{red!60!black}
\definecolor{background}{HTML}{EEEEEE}
\definecolor{delim}{RGB}{20,105,176}
\colorlet{numb}{magenta!60!black}

\newtheorem{assumption}{Assumption}

\def\AA{\mathcal{A}}
\def\BB{\mathcal{B}}
\def\CC{\mathcal{C}}

\def\PP{\mathcal{P}}
\def\Addr{\texttt{Addr}}
\def\Sum{\texttt{Sum}}
\def\Simplify{\texttt{Simplify}}

\def\Bal{\texttt{Bal}}

\lstdefinelanguage{json}{
    numbers=left,
    numberstyle=\scriptsize,
    stepnumber=1,
    numbersep=8pt,
    breaklines=true,
    backgroundcolor=\color{background},
    literate=
     *{0}{{{\color{numb}0}}}{1}
      {1}{{{\color{numb}1}}}{1}
      {2}{{{\color{numb}2}}}{1}
      {3}{{{\color{numb}3}}}{1}
      {4}{{{\color{numb}4}}}{1}
      {5}{{{\color{numb}5}}}{1}
      {6}{{{\color{numb}6}}}{1}
      {7}{{{\color{numb}7}}}{1}
      {8}{{{\color{numb}8}}}{1}
      {9}{{{\color{numb}9}}}{1}
      {:}{{{\color{punct}{:}}}}{1}
      {,}{{{\color{punct}{,}}}}{1}
      {\{}{{{\color{delim}{\{}}}}{1}
      {\}}{{{\color{delim}{\}}}}}{1}
      {[}{{{\color{delim}{[}}}}{1}
      {]}{{{\color{delim}{]}}}}{1},
}

\newcolumntype{C}{>{\centering\arraybackslash}X} 
\newcolumntype{L}{>{\raggedright\arraybackslash}X} 

\begin{document}

\title{Shedding Light on Complex Bitcoin Mixer Transactions: 67-Fold Reduction in Unclassified Cases}

\author{Nikolay Larionov\inst{1} \and
Yekaterina Smolenkova\inst{2} \and
Yury Yanovich\inst{2}}
\titlerunning{Shedding Light on Complex Bitcoin Mixer Transactions}
\authorrunning{N. Larionov et al.}
%

\institute{Moscow Institute of Physics and Technology, Moscow, Russia  \and
Skolkovo Institute of Science and Technology, Moscow, Russia}

\maketitle

\begin{abstract}
Bitcoin's Unspent Transaction Output (UTXO) model enables public analysis of fund flows, but users often merge transactions into Shared Send Mixers (SSMs) to obscure these flows. Untangling SSMs to recover original subtransactions is an NP-complete problem. While a practical untangling algorithm exists, it fails to classify 1.4\% of SSM transactions due to computational time limits. This paper introduces four novel heuristics that exploit structural weaknesses in real-world SSM transactions to resolve these timeout cases: preemptive grouping, connectable singleton, ambiguous pairing, and knapsack fallback. We provide theoretical proofs validating each heuristic and integrate them into an optimized pipeline. Applied to timeout transactions, our approach classifies 98.5\% of previously unresolved cases, reducing the overall unclassified transaction rate from 1.4\% to 0.021\% of all SSM transactions. Our open-source implementation and comprehensive evaluation on the complete Bitcoin blockchain demonstrate that the heuristics effectively untangle previously intractable transactions, enabling more accurate flow analysis and deeper structural insights into cryptocurrency transaction patterns.

\keywords{Blockchain \and Shared Send Mixer \and UTXO \and Blockchain Analytics \and Transaction Analysis}
\end{abstract}

\section{Introduction}
\label{sec:introduction}

Bitcoin stands as the pioneering and largest cryptocurrency by market capitalization built on blockchain technology~\cite{Nakamoto2008, CoinMarketCap2026}. Its globally distributed network employs a Proof-of-Work consensus mechanism, enabling censorship-resistant cross-border transfers of its native digital asset--bitcoin \cite{BedfordTaylor2017,Ren2019,Tovanich2021,Badea2021}. The protocol does not mandate user identity disclosure; instead, it requires only cryptographic proof of ownership: public key hashes when receiving funds and digitally signed transactions with corresponding private keys when sending. Bitcoin encodes recipient information as addresses, which include format identifiers and checksums for validation \cite{Wuille2017,Wu2021}. While data encryption methods exist, they often introduce complexity, higher operational costs, and reduced trust \cite{Ben-Sasson2014,Noether2016,Kugusheva2019a,Korepanova2019a}. Consequently, user anonymity primarily relies on pseudonymity--interacting solely via platform-specific identifiers (addresses) without revealing real-world identities \cite{Liu2021,VoundiKoe2022}. This censorship resistance prohibits transaction freezing or user bans, while the immutable ledger ensures confirmed transactions cannot be reversed. Combined with pseudonymity, these attributes have made Bitcoin and similar cryptocurrencies attractive to criminal organizations for money laundering, illicit commerce, fraud, and ransomware operations \cite{Europol2021a}.

Bitcoin utilizes an Unspent Transaction Output (UTXO) model for currency circulation \cite{Nakamoto2008}. Transactions comprise multiple inputs and outputs, where each output specifies a non-negative value and spending conditions--typically requiring a digital signature from the private key matching the output's embedded public key hash. The public transaction history enables analysis of fund flows, but users often merge multiple payments into single transactions--Shared Send Mixers (SSMs)--to obscure these flows \cite{Maxwell2013}. Despite their longevity, SSMs remain effective privacy mechanisms, as corroborated by recent Europol operational analyses \cite{Europol2021a}.

Kristov Atlas pioneered SSM untangling in 2014 through a Sudoku-inspired matching framework \cite{Atlas2014}. Subsequent research formally established the problem's computational hardness (proving it NP-complete) and demonstrated that exact solutions are constrained to pseudo-polynomial time algorithms~\cite{Yanovich2016b}. Building on this work, a practical algorithm was proposed and applied to the complete Bitcoin blockchain, showing that 15\% of transactions are SSMs, with 90\% uniquely untanglable \cite{Larionov2023}. However, this implementation leaves 1.4\% of SSM transactions unresolved due to computational time limits, creating a significant gap in transaction analysis capabilities.

This study advances SSM untangling by introducing a suite of heuristics specifically designed to resolve timeout transactions. Our contributions are:

\begin{itemize}
    \item \textbf{Four novel heuristics} for SSM untangling: (1) preemptive grouping, (2) connectable singleton, (3) ambiguous pairing, and (4) knapsack fallback--each with theoretical correctness proofs.
    \item \textbf{An optimized pipeline} that applies heuristics in optimal order to maximize classification efficiency, reducing the unclassified transaction rate from 1.4\% to 0.021\% of all SSM transactions--a 98.5\% improvement.
    \item \textbf{Open-source implementation} of the heuristic pipeline, enabling reproducible research and practical deployment.
    \item \textbf{Comprehensive empirical evaluation} on historical Bitcoin data (up to block 882,421), providing detailed statistics on heuristic effectiveness and resulting transaction type distributions.
\end{itemize}

The remainder of this paper is organized as follows: Section~\ref{sec:related_work} surveys related work; Section~\ref{sec:untangling} provides the definition of untangling problem; Section~\ref{sec::Heu} presents new heuristics and provides theoretical proofs of their correctness; Section~\ref{sec::pipepline} details the experimental setup of heuristic application; Section~\ref{sec::Experiments} describes the results of numerical experiments; and Section~\ref{sec:conclusion} concludes with future work.

\section{Related Work}
\label{sec:related_work}

Privacy preservation in digital systems resembles a continuous contest where one party seeks to conceal information while another aims to expose it. In the context of Bitcoin, the privacy model is fundamentally built upon maintaining the anonymity of public keys that serve as transaction endpoints~\cite{Nakamoto2008}. Since the publication of the original Bitcoin whitepaper, various Unspent Transaction Output (UTXO) templates have been developed and deployed, including Pay to Public Key (P2PK), Pay to Public Key Hash (P2PKH), Pay to Script Hash (P2SH), Pay to Witness Public Key Hash (P2WPKH), and Pay to Taproot (P2TR)~\cite{Lombrozo2015}. These different templates utilize public keys in distinct ways, and due to the cryptographic hashing operations involved, it is often not possible to compute one template from another directly~\cite{Menezes1996}. Consequently, researchers in the field typically consider an address as a commitment to a public key rather than the public key itself, which has important implications for how addresses can be linked and analyzed.

Users of the Bitcoin network are free to generate as many addresses as they desire without any practical limitations. To use a generated address for spending funds, the user must retain and securely store the corresponding private key or keys that authorize transactions from that address. The introduction of Hierarchical Deterministic (HD) wallets revolutionized key management by enabling users to derive multiple addresses from a single seed key, making the process of managing numerous addresses as simple as maintaining a single private key~\cite{Pieter2012}. This development, while convenient for users, inadvertently created a significant analytical challenge for researchers attempting to group addresses by their controlling entities or wallets. This challenge has motivated extensive research into heuristics that can infer ownership relationships from on-chain transaction patterns.

Bitcoin usage patterns have inspired several heuristics designed to reveal information about address ownership and transaction flows~\cite{Androulaki2013, Reid2013}. The common spending heuristic operates on the observation that when multiple UTXOs are used as inputs to fund a single transaction, those inputs are likely controlled by the same entity, as users typically combine multiple unspent outputs to achieve a desired payment amount~\cite{Meiklejohn2013}. The one-time change heuristic identifies change addresses based on the pattern that occurs when the total input amount exceeds the intended payment amount, resulting in the user receiving change back to an address they control~\cite{Meiklejohn2013, Moser2022}. The coinbase clustering heuristic groups outputs of miner reward transactions, based on the observation that mining pools consolidate their earnings regardless of the number of individual UTXOs generated~\cite{Eck2021}. While these heuristics have proven valuable for address clustering and transaction analysis, they remain fundamentally probabilistic in nature and are therefore prone to errors and misclassifications in certain scenarios~\cite{Larionov2024}.

One significant source of errors in heuristic application is the presence of shared send mixers, which intentionally obfuscate transaction lineage by combining multiple payments from different users into a single transaction~\cite{Maxwell2013}. These mixers simultaneously tangle the transaction history and reduce the effective fee per transfer by splitting fixed transaction costs among multiple participants. The features offered by shared send mixers made them a popular component of custodial wallet implementations before approximately 2017, and recent reports from Europol indicate that these mechanisms continue to be exploited in illicit operations for money laundering and other criminal activities~\cite{Europol2021a, IOCTA2021}. The computational foundations for untangling shared send mixer transactions were formally established through mathematical analysis that proved the problem to be NP-complete, demonstrating that exact solutions require computational effort that grows exponentially with transaction complexity~\cite{Yanovich2016b}. Despite this theoretical hardness, a practical untangling algorithm was later developed and implemented, providing the first comprehensive analysis of mixer transactions across the entire Bitcoin blockchain history~\cite{Larionov2023}.

The practical implementation of the untangling algorithm revealed that approximately fifteen percent of all Bitcoin transactions qualify as shared send mixers according to the formal definition requiring multiple inputs and multiple outputs~\cite{Larionov2023}. Among these mixer transactions, about ninety percent could be uniquely untangled, meaning that a single minimal partition of inputs and outputs could be identified. However, approximately 1.4\% of mixer transactions remained unclassified because they exceeded a predefined computational time limit of one second per transaction. Manual inspection of these timeout transactions revealed recurrent structural patterns that, while computationally challenging for generic subset-sum search algorithms, exhibited exploitable characteristics that could potentially be leveraged to either resolve the transactions outright or dramatically reduce the search space required for analysis.

The timeout transactions that resist classification are frequently characterized by having large numbers of inputs and outputs combined with intricate value relationships that make them resemble challenging instances of the knapsack problem. Indeed, the core requirement for untangling--that for any candidate grouping, the total value of selected inputs must fall between the total value of their corresponding outputs and that same sum plus the transaction fee--creates a constraint mathematically equivalent to checking whether a subset of input amounts can achieve a target sum defined by the outputs. This is precisely a variant of the subset-sum problem, establishing a direct connection to the well-studied knapsack problem. The knapsack problem has received extensive attention in the operations research literature, with researchers developing both exact algorithms such as branch-and-bound methods and dynamic programming approaches, as well as heuristic algorithms including greedy methods and scaling techniques that trade optimality for computational efficiency~\cite{Martello1987}. More recently, machine learning approaches have been applied to learn heuristics for knapsack instances by training neural networks on input-output examples, demonstrating that learned representations can capture useful structural information about item relationships and capacity constraints~\cite{Nomer2020}. While these machine learning methods show promise for certain problem classes, they have not yet been integrated into practical blockchain analytics pipelines for transaction analysis.

Beyond the core task of transaction untangling, researchers have developed numerous complementary de-anonymization techniques that enhance address clustering and entity identification. Off-chain data sources provide publicly available tags and labels that can be linked to on-chain activity, with services like WalletExplorer offering categorized organization labels and Bitcoin Abuse maintaining databases of scam reports associated with specific addresses~\cite{Chainalysis2023, BitcoinAbuse2023}. These external data sources enable delayed or real-time correlation of blockchain activity with real-world entities, supporting compliance and investigative efforts. Network-level attacks have also been proposed for de-anonymization purposes, including Sybil attacks and fake node deployments that aim to harvest IP addresses of transaction originators~\cite{Biryukov2015}. In response, countermeasures such as transaction remote release and delayed propagation have been developed to protect user privacy at the network layer~\cite{ShenTu2015b, Fanti2018}. Machine learning models increasingly integrate on-chain features with off-chain information to perform sophisticated analytical tasks, including clustering addresses by wallet ownership~\cite{Ermilov2017, Zhang2020, Moser2022}, classifying wallets by entity type and behavior patterns~\cite{Toyoda2019, Liu2021, Nerurkar2022}, and computing risk scores for anti-money laundering compliance and regulatory reporting~\cite{Weber2019, Zhang2020a, Day2021, Wahrstatter2023}.

Recent research has expanded the scope of address linkage analysis by investigating public key reuse across different address formats within the Bitcoin network itself~\cite{Kalodner2020}. The KeyLinker method demonstrates that when a public key appears in any transaction input, all addresses derived from that same public key across different formats can be reliably linked, providing cryptographically grounded evidence for address grouping that does not rely on probabilistic heuristics~\cite{Smolenkova2025}. This approach leverages the practical impossibility of private key collisions and the collision resistance of the hash functions used in address generation to establish deterministic connections between seemingly unrelated addresses. Extending this concept further, subsequent work has examined public key reuse across multiple cryptocurrencies spanning both UTXO-based and account-based models, revealing that cryptographic keys are extensively and continually reused across networks such as Bitcoin, Ethereum, Litecoin, Dogecoin, Zcash, and Tron~\cite{Stutz2026}. These cross-chain key reuse patterns have significant implications for privacy and security, as they enable entity clustering across heterogeneous blockchain systems that would otherwise appear disconnected.

While these various lines of research provide valuable tools and insights for blockchain analysis, they address different aspects of the broader de-anonymization challenge. The specific problem of untangling shared send mixer transactions remains distinct from address clustering and key reuse analysis, focusing instead on recovering the internal flow of funds within individual transactions that deliberately combine multiple payments. The existing literature provides a solid theoretical and practical foundation for SSM analysis but leaves a specific and significant gap: the approximately one point four percent of mixer transactions that resist current untangling algorithms due to computational timeout limitations~\cite{Larionov2023}. These unresolved transactions represent a critical gap in transaction analysis capabilities, potentially obscuring fund flows that could be relevant for forensic investigations, regulatory compliance, and academic research.

The present paper addresses this gap directly by introducing four novel heuristics specifically designed to exploit structural weaknesses observed in real-world timeout transactions. These heuristics are grounded in theoretical proofs of correctness and are integrated into an optimized pipeline that applies them in an order designed to maximize classification efficiency while minimizing computational overhead. By targeting the structural patterns that cause transactions to exceed time limits in generic untangling algorithms, our approach resolves ninety-eight point five percent of previously timeout transactions, reducing the overall unclassified transaction rate from 1.4\% to approximately 0.021\% of all shared send mixer transactions. This sixty-seven-fold reduction in unclassified cases enables near-complete analysis of mixer transactions across the Bitcoin blockchain, opening the door to more accurate flow tracing, deeper structural insights into cryptocurrency transaction patterns, and enhanced capabilities for forensic investigations and regulatory compliance applications.

\section{Transactions untangling}
\label{sec:untangling}

In this section we formulate the untangling problem as it was stated in \cite{Larionov2023}. 
A Bitcoin address serves as a unique identifier for a UTXO payment instruction. While Bitcoin addresses can provide additional information in the scope of the current problem they are treated only as identifier.
In each transaction, an address and a non-negative amount are used to represent the input and output. To differentiate between them, we use lowercase letters for amounts, uppercase letters for addresses, and calligraphic capital letters for multisets of amount and address pairs. For instance, $(c_k, C_k)$ denotes an input or output with the address $C_k$ and the amount $\CC = \cup_{k=1}^K {(c_k, C_k)}$ represents a collection of such pairs.

Coinbase transactions are omitted since they do not involve inputs, are not shared send mixers, and do not appear after the untangling process.

A Bitcoin transaction is represented as an ordered triple $t = (\AA, \BB, c)$, where:
\begin{itemize}
\item $\AA$ is a multiset of transaction inputs, where each input $(a_n, A_n) \in \AA$ consists of an ordered pair with address $A_n$ and value $a_n \geq 0$.
\item $\BB$ is a multiset of transaction outputs, where each output $(b_m, B_m) \in \BB$ consists of an ordered pair with address $B_m$ and value $b_m \geq 0$.
\item $c = \sum\limits_{(a_n, \cdot) \in \AA} a_n - \sum\limits_{(b_m, \cdot) \in \BB} b_m \geq 0$ represents the transaction fee.
\end{itemize}
For any arbitrary multiset of transaction inputs or outputs $\CC$, $\Addr(\CC) = \cup_{(\cdot, C) \in \CC} {C}$ denotes the multiset of addresses in $\CC$, and $\Sum(\CC) = \sum_{(c, \cdot) \in \CC} c$.

An address may appear multiple times within the set of inputs and/or outputs. To streamline the untangling process, a $\Simplify$ function is used to consolidate multiple usages of the same address. $\Simplify$ processes the multisets of inputs and outputs in the following manner:
\begin{enumerate}
    \item
        The inputs are grouped by addresses and the sums of each group are calculated. The same is done for the outputs.
    \item
        For each address $C$ that exists both as an input and an output, the smaller amount between the input and output is substituted for both.
    \item
        Any pairs with zero amounts are discarded.
\end{enumerate}
I.e., 
$\Simplify: \; t \mapsto t'$,
where
\begin{align*}
    & t=(\AA, \BB, c), \; t'=(\AA', \BB', c'), \\
    & \AA' = \{(-\Bal(t, A), A) | A \in \Addr(\AA) \wedge \Bal(t, A) <0\}, \\
    & \BB' = \{(\Bal(t, B), B) | B \in \Addr(\BB) \wedge \Bal(t, B) >0\},  \\
    & c' = c,
\end{align*}
and $\Bal$ is the balance function
\begin{align*}
    \Bal(t, C) \equiv \sum\limits_{(b,C) \in \BB} b - \sum\limits_{(a,C) \in \AA} a.
\end{align*}

The untangling procedure works with simplified transactions. We omit any special notations for the simplified transactions that are possible. 

Let $|\cdot|$ be the cardinality operator. Denote $N \equiv |\AA|$ and $M \equiv |\BB|$.

Public transaction information includes details about inputs and outputs of the transactions, but not internal currency flows inside it. Shared send analysis aims to infer or reconstruct these missing edges based on the available vertex data and additional information. It is based on the following assumption:
\begin{assumption}
The intended expenditure from each subset of inputs does not exceed their actual expenditure:
\begin{align*}
        \forall \BB' \subset \BB \colon \; \sum\limits_{
        \begin{array}{l}
            (a, A) \in \AA\colon \; \exists (b, B) \in \BB'  \\ \wedge \left((a,A), (b, B)\right) \in E
        \end{array}} a \geq \sum\limits_{(b, B) \in \BB} b.
    \end{align*}
\end{assumption} 

\begin{definition}
\label{def:SST}
A (simplified) transaction $t=(\AA, \BB, c)$ is called a \textit{shared send transaction} if and only if it has multiple inputs and multiple outputs: $N\equiv|\AA|>1$ and $M\equiv|\BB|>1$. If $N\equiv|\AA|\leq 1$ and $M\equiv|\BB|=1$, the transaction is not a shared send, and we call it \textit{regular}.
\end{definition}

A regular transaction cannot be classified as a shared send because it involves exactly one input or output, meaning there is no possibility for funds to be split or shared among multiple inputs or outputs, respectively.

\begin{definition}
\label{def:connectable}
Given a transaction $t = (\AA, \BB, c)$, a \textit{pair} of sets $\AA' \subset \AA$ and $\BB' \subset \BB$, with at least one non-empty set, is called \textit{connectable} if and only if the following condition holds:
$$\Sum(\BB') + c \geq \Sum(\AA') \geq \Sum(\BB').$$
\end{definition}

Intuitively, connectivity means that the bitcoins from the addresses in $\AA'$ were spent exclusively to the addresses in $\BB'$, and similarly, the funds from $\AA \setminus \AA'$ were spent to $\BB \setminus \BB'$.

A collection of sets $\{X_k\}_{k=1}^K$ is called a partition of a set $X$ if the sets in the collection are pairwise disjoint, and their union equals $X$. We denote a partition as $X = \sqcup_{k=1}^K X_k$.

 \begin{definition}
     Consider a transaction $t = (\AA, \BB, c)$. A pair of $K$-element \textit{partitions} of input and
     output sets, i.e. $$\AA = \sqcup_{k=1}^K \AA_k, \;\; \BB = \sqcup_{k=1}^K \BB_k,$$ is called \textit{acceptable} iff each pair $(\AA_k, \BB_k)$ is connectable.
 \end{definition}

We denote such partitions as $$\PP = \{(\AA_1, \BB_1), \dots, (\AA_K, \BB_K)\},$$ where sets $\AA_k$ and $\BB_k$ are said to correspond to each other.

Since merging any two connectable pairs in a partition yields another valid partition, the total number of possible partitions can be extremely large. To render the problem tractable, untangling considers only minimal partitions, thereby reducing the solution space.

\begin{definition}
     A  \textit{connectable pair} $(\AA, \BB)$ is called \textit{minimal} iff the sets in it does not allow smaller partitions:
     $\forall \AA_1, \AA_2, \BB_1, \BB_2 \colon$ $\AA = \AA_1 \sqcup \AA_2$ and $\BB = \BB_1 \sqcup \BB_2$ $\rightarrow$ at least one pair $(\AA_1, \BB_1)$ or $(\AA_2, \BB_2)$ is not connectable.   
 \end{definition}

 \begin{definition}
 \label{def:minimalConnectable}
     An acceptable \textit{partition} $\PP = \{(\AA_k, \BB_k)\}_{k=1}^K$ is called \textit{minimal} iff it cannot be further subdivided into an acceptable partition, i.e., $(\AA_k, \BB_k)$ is a minimal connectable pair for each $k = 1, \dots, K$.   
 \end{definition}

Using the concept of minimal partitions into connectable pairs, we formally state the untangling problem.

\begin{definition}
    The minimal untangling problem is, based on the transaction $t = (\AA, \BB, c)$, to produce every minimal partition of the transaction graph. 
\end{definition}

This restriction implicitly assumes that a non-mixing transaction is unlikely to mimic the structure of a mixing transaction. Furthermore, value flows are more likely to correspond to a minimal acceptable partition than to a non-minimal one, as creating divisible flows would require additional, deliberate measures.

\begin{definition}
\label{def:txType}
Based on the solution to the minimal untangling problem, a shared send transaction $t=(\AA, \BB, c)$ is classified as:
\begin{itemize}
\item \textit{Simple} if and only if there are no non-trivial acceptable partitions.
\item \textit{Separable} if and only if there exists a unique minimal partition, and it contains disconnected components.
\item \textit{Ambiguous} if and only if there are at least two distinct minimal partitions.
\end{itemize}
\end{definition}

\subsection{Transaction examples}

Here we demonstrate examples of transactions of different classes (Figure \ref{figure:ex_combined}). Simple transactions are the transactions where none connectable pairs except from the non-trivial. For separable transactions where a unique minimal partition exists with disconnected components. This task requires to check all the possible subsets making the task similar to knapsack problem and making it NP-complete. The ambiguous (Figure \ref{figure:ex_ambiguous}) transactions represent the transaction where the search for the single untangling found several possible different untanglings. 

\begin{figure}[h!]
    \centering
    \begin{minipage}[b]{0.48\linewidth}
        \centering
        \includegraphics[width=\linewidth]{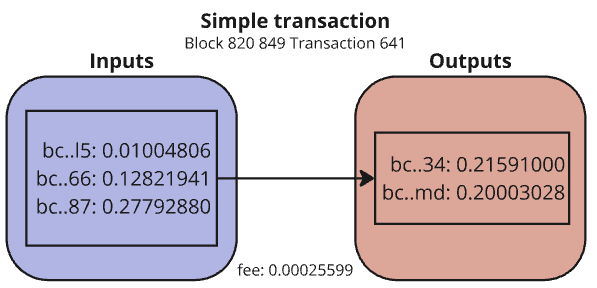}
        \vspace{0.5em}
        \textbf{a)} Example of simple transaction.
        \label{figure:ex_simple}
    \end{minipage}
    \hfill
    \begin{minipage}[b]{0.48\linewidth}
        \centering
        \includegraphics[width=\linewidth]{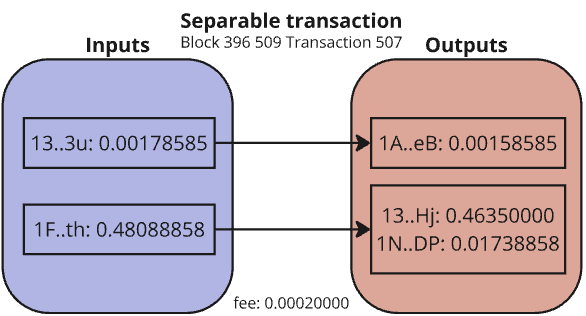}
        \vspace{0.5em}
        \textbf{b)} Example of separable transaction.
        \label{figure:ex_separable}
    \end{minipage}
    \caption{Transaction examples. \textbf{(a)} No subset of the inputs can form a connectable pair with only one output, so the only connectable pair is the whole transaction. \textbf{(b)} Input 13..3u forms a connectable pair with output 1A..eB, and the other input forms a pair with the other two.}
    \label{figure:ex_combined}
\end{figure}



\begin{figure}[h!]
    \centering
    \includegraphics[width=1\linewidth]{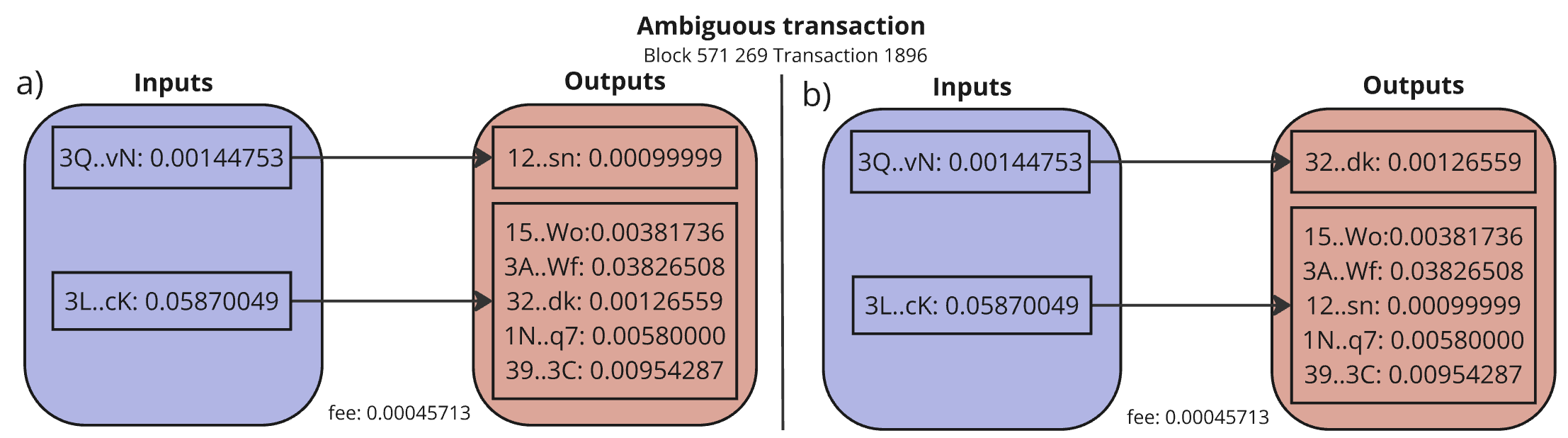}
    \caption{Examples of ambiguous transaction. In this transaction input 3Q..vN and output 12..sn form the connectable pair, and adresses 3Q..vN, 32..dk form a connectable pair. So transaction has different untanglings.}
    \label{figure:ex_ambiguous}
\end{figure}

\section{Untangling Heuristics}
\label{sec::Heu}

The NP-completeness of the untangling problem renders brute-force search infeasible for transactions with numerous inputs and outputs, leading to the ``time limit'' classification in practical solvers \cite{Larionov2023}. Manual inspection of such timeout transactions reveals recurrent structural patterns that, while computationally challenging for generic subset-sum search, can be exploited to either resolve the transaction outright or dramatically prune the search space. This section formalizes four such patterns as heuristics, each accompanied by a proof of correctness.

\subsection{Heuristic 1: Preemptive Grouping}

This heuristic identifies scenarios where a single input or output is so large that it must be paired with its counterpart, allowing the pair to be collapsed into a single entity.

\begin{theorem}
\label{thm:heuristic1}
Consider a simplified transaction $t = (\AA, \BB, c)$.
\begin{enumerate}
    \item If there exists an input $(a, A) \in \AA$ such that $a > \Sum(\AA \setminus \{a\})$ and for every output $(b, B) \in \BB$, $b > \Sum(\AA \setminus \{a\})$, then any connectable pair that contains $A$ must also contain $B$, and vice versa.
    \item If there exists an output $(b, B) \in \BB$ such that $b > \Sum(\BB \setminus \{b\}) + c$ and for every input $(a, A) \in \AA$, $a > \Sum(\BB \setminus \{b\}) + c$, then any connectable pair that contains $B$ must also contain $A$, and vice versa.
\end{enumerate}
\end{theorem}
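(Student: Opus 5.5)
The plan is to read ``connectable pair'' in the statement as a block $(\AA',\BB')$ of an acceptable partition, which is the setting the heuristic is meant for, and to check each claim by direct sum comparisons. This reading is essential. The first implication in each part (``if the pair contains $A$ then it contains $B$'') fails for an isolated connectable pair: for example $\AA'=\{(a,A)\}$ and $\BB'=\BB\setminus\{(b,B)\}$ can satisfy Definition~\ref{def:connectable} on its own. So first I will record a complement lemma. If $(\AA',\BB')$ is connectable, then so is $(\AA\setminus\AA',\BB\setminus\BB')$, provided one of its sides is nonempty. The proof is a short computation. Since $\Sum(\AA)-\Sum(\BB)=c$,
$$\Sum(\AA\setminus\AA')-\Sum(\BB\setminus\BB') = c-\bigl(\Sum(\AA')-\Sum(\BB')\bigr)\in[0,c],$$
because $\Sum(\AA')-\Sum(\BB')\in[0,c]$. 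The lemma also shows that, for a $2$-block partition, every block is the complement of the other.

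For part~1, fix the large input $(a,A)$ and an arbitrary output $(b,B)$, so that $b>\Sum(\AA\setminus\{a\})$.

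\textbf{The pair contains $B$ but not $A$.} Then $\AA'\subseteq\AA\setminus\{a\}$. Hence $\Sum(\AA')\le\Sum(\AA\setminus\{a\})<b\le\Sum(\BB')$, which violates the lower bound $\Sum(\AA')\ge\Sum(\BB')$. This direction holds for any connectable pair.

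\textbf{The pair contains $A$ but not $B$.} Then $(b,B)$ lies in the complement, which is nonempty and hence connectable by the lemma. Its inputs avoid $A$, so $\Sum(\AA\setminus\AA')\le\Sum(\AA\setminus\{a\})<b\le\Sum(\BB\setminus\BB')$. This contradicts connectability of the complement.

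Since $B$ was an arbitrary output, every pair containing $A$ contains all outputs, and every pair containing any output contains $A$. The hypothesis $a>\Sum(\AA\setminus\{a\})$ is not actually needed; it only makes the configuration plausible.

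Part~2 is the mirror image, using the upper bound of connectability. Fix the large output $(b,B)$ and an arbitrary input $(a,A)$, so that $a>\Sum(\BB\setminus\{b\})+c$.

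\textbf{The pair contains $A$ but not $B$.} Then $\Sum(\AA')\ge a>\Sum(\BB\setminus\{b\})+c\ge\Sum(\BB')+c$. This violates $\Sum(\BB')+c\ge\Sum(\AA')$.

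\textbf{The pair contains $B$ but not $A$.} Then the complement contains $A$ and has outputs inside $\BB\setminus\{b\}$. So $\Sum(\AA\setminus\AA')\ge a>\Sum(\BB\setminus\BB')+c$, and by the lemma this contradicts connectability of the complement.

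The main obstacle is justifying the complement step, because connectability is a property of a single pair while the theorem is really about blocks of partitions. I would state the lemma explicitly and note that the heuristic is applied to acceptable (in particular minimal) partitions, where the complement of a block is a union of blocks. Such a union is connectable, either by summing the inequalities of its blocks or directly by the lemma.
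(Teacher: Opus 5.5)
Your case analysis is correct and is essentially the paper's argument. The case $B\in\BB'$, $A\notin\AA'$ contradicts the lower bound $\Sum(\AA')\ge\Sum(\BB')$ directly, and the converse is obtained by passing to the complement pair. You go beyond the paper in two ways: you actually prove the complement property, which the paper only asserts, and you write out part~2, where the ``symmetry'' consists in swapping the roles of the lower bound and the fee-shifted upper bound.

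Your framing, however, rests on a false claim. The statement does not fail for isolated connectable pairs, so the restriction to blocks of an acceptable partition is unnecessary.

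\emph{Part 1.} Set $S=\Sum(\AA\setminus\{a\})$, so that $\Sum(\BB)=a+S-c$. Your proposed pair $\AA'=\{(a,A)\}$, $\BB'=\BB\setminus\{(b,B)\}$ then has $\Sum(\AA')-\Sum(\BB')=b+c-S>c$. It violates the upper bound and is not connectable. More generally, every pair with $A\in\AA'$ and $B\notin\BB'$ satisfies $\Sum(\AA')-\Sum(\BB')\ge a-(\Sum(\BB)-b)=b+c-S>c$.

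\emph{Part 2.} The analogous pair $(\AA\setminus\{(a,A)\},\{(b,B)\})$ has $\Sum(\AA')-\Sum(\BB')=\Sum(\BB\setminus\{b\})+c-a<0$, so it violates the lower bound.

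Your own lemma already covers arbitrary connectable pairs. It needs only the fee identity and a nonempty complement, and the complement contains $(b,B)$ (resp.\ $(a,A)$) in exactly the cases where you invoke it. So your argument proves the theorem as stated, for every connectable pair. You should drop the counterexample and the reinterpretation: the ``main obstacle'' of your last paragraph does not exist.
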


\begin{proof}
We prove the first case; the second follows by symmetry. Assume, for contradiction, that a connectable pair $(\AA', \BB')$ exists with $B \in \BB'$ but $A \notin \AA'$. By the condition, $\Sum(\BB') \geq b > \Sum(\AA \setminus \{a\})$. Since $A \notin \AA'$, we have $\AA' \subseteq \AA \setminus \{a\}$, and thus $\Sum(\AA') \leq \Sum(\AA \setminus \{a\})$. This yields $\Sum(\BB') > \Sum(\AA')$, which violates the connectability condition $\Sum(\AA') \geq \Sum(\BB')$ from Definition~\ref{def:connectable}. A symmetric contradiction arises if $A \in \AA'$ and $B \notin \BB'$, as the complement of a connectable pair is also connectable.
\end{proof}

\begin{figure}[h!]
    \centering
    \includegraphics[width=0.5\linewidth]{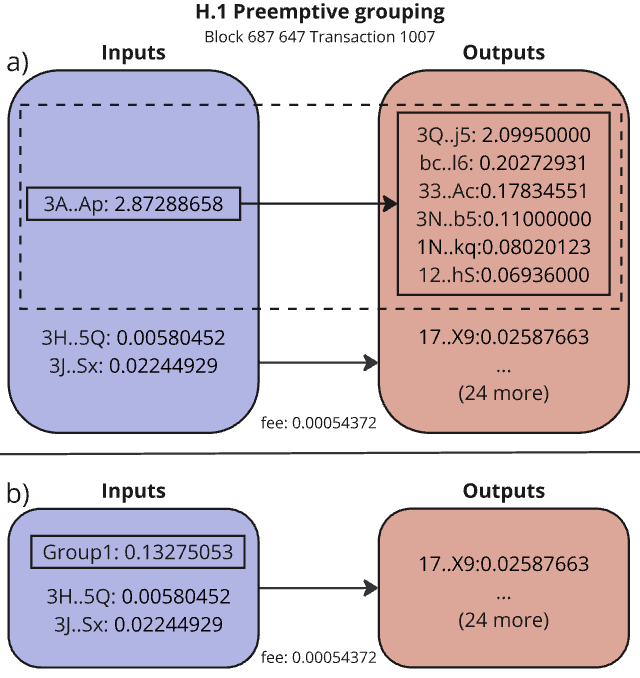}
    \caption{Application of Heuristic 1. Outputs 3Q..j5, bc..l6, 33..Ac, 3N..b5, 1N..kq, 12..hS each exceed the sum of all inputs except the largest (3A..Ap). Therefore, any connectable pair containing either output must also contain 3A..Ap, enabling consolidation into address Group1.}
    \label{figure:h1}
\end{figure}

When the conditions of Theorem~\ref{thm:heuristic1} are satisfied, the input $A$ and output $B$ are effectively locked together (Figure~\ref{figure:h1}). The heuristic merges them into a single logical entity: if $a \geq b$, the pair is replaced by a single input of value $a-b$ at address $A$; otherwise, it is replaced by a single output of value $b-a$ at address $B$. This reduction preserves all valid untanglings while strictly decreasing the cardinality of $\AA$ or $\BB$. The process can be applied iteratively until no further qualifying pairs exist.

\subsection{Heuristic 2: Connectable Singleton Reduction}

Transactions with exactly two inputs or exactly two outputs exhibit a structural symmetry that halves the search space.

\begin{theorem}
\label{thm:heuristic2}
For a transaction with $|\AA| = 2$, it is sufficient to consider only the smaller input when enumerating candidate connectable pairs. The same holds symmetrically for the smaller output when $|\BB| = 2$.
\end{theorem}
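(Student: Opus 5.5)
The plan rests on the complementation symmetry of connectable pairs already used in the proof of Theorem~\ref{thm:heuristic1}. If $(\AA', \BB')$ is connectable, then so is its complement $(\AA \setminus \AA', \BB \setminus \BB')$.

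The first step is to verify this symmetry directly from Definition~\ref{def:connectable}, using $\Sum(\AA) = \Sum(\BB) + c$. From $\Sum(\BB') \le \Sum(\AA') \le \Sum(\BB') + c$, subtracting from the totals gives
$$\Sum(\BB \setminus \BB') + c \;\ge\; \Sum(\AA \setminus \AA') \;\ge\; \Sum(\BB \setminus \BB').$$
One checks that the non-emptiness requirement is inherited unless the pair is the whole transaction. The whole transaction is the trivial pair and is excluded from candidate enumeration anyway.

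The second step fixes $\AA = \{(a_1, A_1), (a_2, A_2)\}$ with $a_1 \le a_2$. Any non-trivial candidate pair $(\AA', \BB')$ has $\AA'$ equal to one of $\emptyset$, $\{A_1\}$, $\{A_2\}$, or $\AA$.
\begin{itemize}
\item A pair with $\AA' = \{A_2\}$ is the complement of a pair with $\AA' = \{A_1\}$, and vice versa.
\item A pair with $\AA' = \AA$ is the complement of one with $\AA' = \emptyset$.
\end{itemize}
The map $(\AA', \BB') \mapsto (\AA \setminus \AA', \BB \setminus \BB')$ is an involution that preserves connectability. So the set of all connectable pairs, and hence every acceptable partition into two parts, is recovered from those pairs whose input side lies in $\{\emptyset, \{A_1\}\}$. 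With $|\AA| = 2$, every partition has at most two blocks on the input side, which do not both have empty inputs unless outputs pair with zero. The search therefore only needs to test subsets $\BB' \subseteq \BB$ against $\{A_1\}$ (and against $\emptyset$, where the test degenerates to $\Sum(\BB') \le 0$). Minimality is preserved because the complement of a minimal pair's partner block is determined.

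The third step handles $|\BB| = 2$ by the identical argument with the roles of inputs and outputs exchanged. Here one enumerates subsets $\AA'$ against the smaller output.

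The main obstacle is justifying ``smaller'' specifically, since complementation already shows that either input suffices. I would argue the choice is for efficiency, not correctness. With $a_1 \le a_2$, any candidate $\BB'$ must satisfy $\Sum(\BB') \le a_1$. This prunes more subsets, and subset-sum enumeration bounded by the smaller target terminates faster. I would state this pruning remark explicitly after the correctness argument.
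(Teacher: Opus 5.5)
Your proposal is correct and follows essentially the same route as the paper: its proof likewise observes that a connectable pair avoiding the smaller input is the complement of one containing it, so every partition can be recovered from pairs involving $A_1$. Your version is more careful than the paper's in three ways. You verify complement-closure from $\Sum(\AA) = \Sum(\BB) + c$. You keep the empty-side candidate that the paper's proof skips; this is vacuous for inputs but genuinely non-vacuous when $|\BB| = 2$, since any input set of total at most $c$ pairs with $\emptyset$, so your ``at most two blocks'' remark does not transfer verbatim to that case, though the complementation argument is unaffected. And you rightly note that ``smaller'' matters only for pruning, not for correctness.
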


\begin{proof}
Let $\AA = \{A_1, A_2\}$ with $a_1 \leq a_2$. Any connectable pair $(\AA', \BB')$ either includes $A_1$ or it does not. If it does not include $A_1$, then $\AA' = \{A_2\}$ and its complement $\overline{\AA'} = \{A_1\}$ is also a valid partition component. Therefore, every valid partition can be reconstructed from the set of connectable pairs involving the smaller input $A_1$. The argument for outputs is identical.
\end{proof}

\begin{figure}[h!]
    \centering
    \begin{minipage}[b]{0.48\linewidth}
        \centering
        \includegraphics[width=\linewidth]{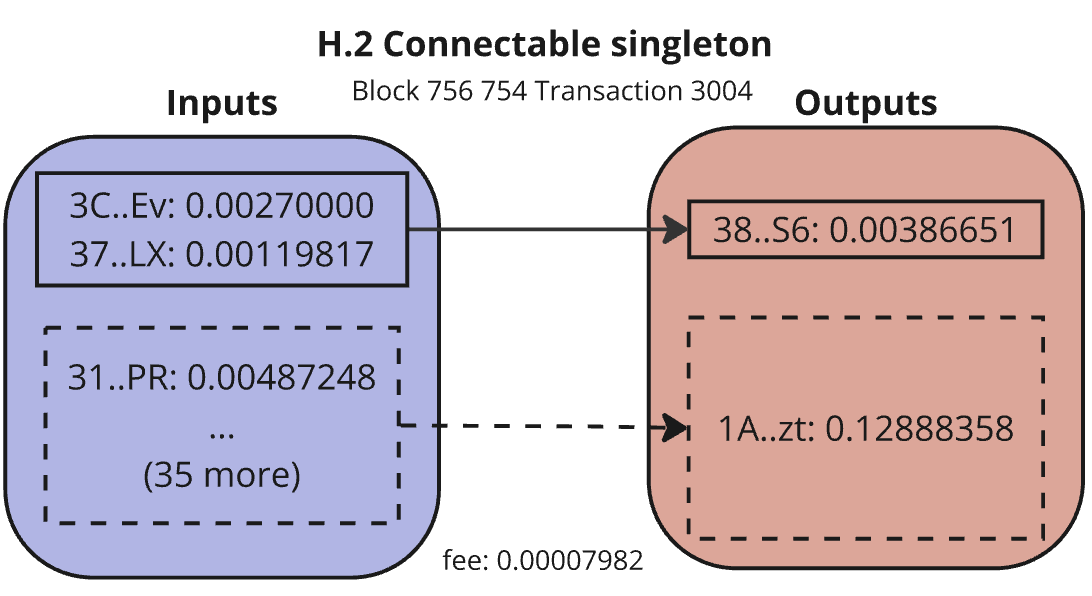}
        \vspace{0.5em}
        \textbf{a)} Application of Heuristic 2
        \label{figure:h2}
    \end{minipage}
    \hfill
    \begin{minipage}[b]{0.48\linewidth}
        \centering
        \includegraphics[width=\linewidth]{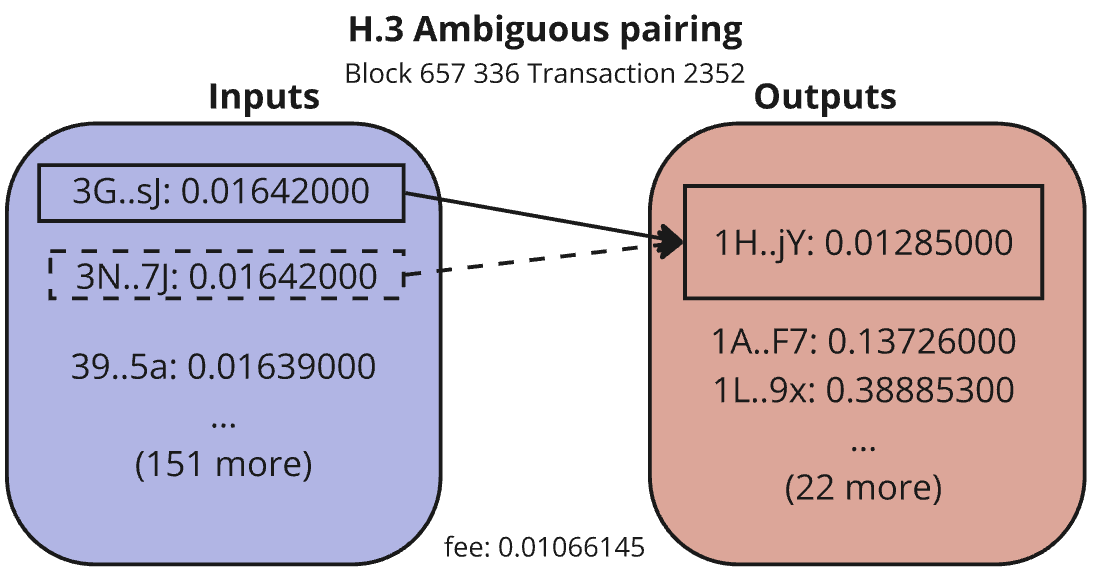}
        \vspace{0.5em}
        \textbf{b)} Application of Heuristic 3
        \label{figure:h3}
    \end{minipage}
    \caption{Application of heuristics. \textbf{(a)} Heuristic 2: With two inputs, examining only the smaller output (38..S6) is sufficient to identify the connectable pair (\{3C..Ev, 37..LX\}, 31..PR). Some other inputs (ex. 31..RP) can be safely ignored during this check as they exceed the value of 38..S6 + fee. \textbf{(b)} Heuristic 3: Inputs 3G..sJ and 3N..7J are of equal value. Since 3G..sJ forms a connectable pair with output 1H..jY, the existence of 3N..7J automatically classifies the transaction as ambiguous.}
    \label{figure:heuristics_combined}
\end{figure}


This theorem enables a targeted subset enumeration. Instead of iterating over all $2^{|\AA|}$ input subsets, the algorithm iterates only over subsets that include the smaller element and prunes the search once the cumulative sum exceeds $\min(\BB)+c$. For transactions with two outputs, the symmetric procedure is applied.

\subsection{Heuristic 3: Ambiguity via Identical Values}

Timeout transactions frequently contain multiple inputs or outputs of identical value. This seemingly minor artifact is a sufficient condition for ambiguity.

\begin{theorem}
\label{thm:heuristic3}
A transaction is \textit{ambiguous} if either of the following holds:
\begin{enumerate}
    \item There exist distinct inputs $a_i, a_j \in \AA$ with $a_i = a_j$ and a subset $\BB' \subseteq \BB$ such that $(\{a_i\}, \BB')$ is connectable.
    \item There exist distinct outputs $b_i, b_j \in \BB$ with $b_i = b_j$ and a subset $\AA' \subseteq \AA$ such that $(\AA', \{b_i\})$ is connectable.
\end{enumerate}
\end{theorem}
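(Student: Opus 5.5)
The plan is a swap argument. Take a partition that puts $a_i$ with $\BB'$, exchange the roles of the two equal-valued inputs, and show the result is a second, different minimal partition. I treat case~1; case~2 follows by exchanging the roles of inputs and outputs, using $c$ on the other side of the inequality.

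First I build a minimal partition in which $a_i$ is isolated with some outputs. Suppose $(\{a_i\},\BB')$ is connectable. Its complement $(\AA\setminus\{a_i\}, \BB\setminus\BB')$ is also connectable, since the fee identity $\Sum(\AA)=\Sum(\BB)+c$ gives
\[
\Sum(\BB\setminus\BB')+c \;\ge\; \Sum(\AA\setminus\{a_i\}) \;\ge\; \Sum(\BB\setminus\BB').
\]
This is the same complement fact used in the proof of Theorem~\ref{thm:heuristic1}. So $\{(\{a_i\},\BB'),(\AA\setminus\{a_i\},\BB\setminus\BB')\}$ is acceptable. I refine it to a minimal partition $\PP$ by repeatedly splitting any non-minimal component. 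Any piece of the first component that has an empty input side must have a zero-sum output side, which the simplified transaction excludes. Hence the component of $\PP$ containing $a_i$ has input set exactly $\{a_i\}$, paired with some $\BB''\subseteq\BB'$. Note also that $a_j$ lies in a different component $(\AA_j,\BB_j)$, and $\AA_j\neq\{a_i\}$.

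Next I define $\PP'$ by exchanging the labels $a_i$ and $a_j$ throughout $\PP$. Every connectability condition and every minimality condition depends only on the multiset of values. Since $a_i=a_j$, each component of $\PP'$ has the same value sums as the corresponding component of $\PP$, so $\PP'$ is again a minimal acceptable partition. In $\PP'$, the outputs $\BB''$ are paired with $\{a_j\}$ instead of $\{a_i\}$. Because $\BB''\neq\emptyset$, this makes $\PP'\neq\PP$ as collections of labelled pairs. So there are at least two distinct minimal partitions, and by Definition~\ref{def:txType} the transaction is ambiguous.

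The main obstacle is the degenerate case where swapping yields nothing new. This would happen if $(\{a_i\},\BB')$ were the whole transaction, i.e.\ $\AA=\{a_i\}$. That is impossible here because $a_j$ is a second input. I also must justify that ``distinct minimal partitions'' means distinct as labelled multisets of addresses, not merely as value patterns. I will state this reading explicitly, since the paper treats inputs as address/value pairs and simplified addresses are unique.
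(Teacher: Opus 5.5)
\paragraph{Gap: the case $\BB'=\emptyset$.} Your swap argument is the paper's own substitution idea, done more carefully, but it breaks at the step ``Because $\BB''\neq\emptyset$''. The hypothesis does not guarantee this. Definition~\ref{def:connectable} only requires one of the two sets to be non-empty, so $\BB'=\emptyset$ is allowed whenever $a_i\le c$. In that case your refinement step gives $\BB''=\BB'=\emptyset$, and the swap can then return the same partition. Distinct components have disjoint output sets, so $\PP'=\PP$ holds exactly when $\BB'=\emptyset$ and $(\{a_j\},\emptyset)$ is also a component of $\PP$. The degenerate case you flag, $\AA=\{a_i\}$, is not the relevant one.

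This is not just a hole in the write-up: the statement is false there. Take inputs $1,1,10$ at $A_1,A_2,A_3$ and outputs $4,4$ at $B_1,B_2$. This is a valid simplified shared send with $c=4$.
\begin{itemize}
\item The pair $(\{A_1\},\emptyset)$ is connectable, and $A_1,A_2$ have equal value, so the hypothesis of case~1 holds.
\item Any component containing $B_1$ or $B_2$ needs input sum at least $4$, so it contains $A_3$.
\item Minimality then strips $A_1$ and $A_2$ off as fee-only singletons.
\end{itemize}
The unique minimal partition is $\{(\{A_1\},\emptyset),(\{A_2\},\emptyset),(\{A_3\},\{B_1,B_2\})\}$, so the transaction is separable, not ambiguous. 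The paper's one-line proof has the same blind spot.

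\paragraph{Repair.} Add the hypothesis $\BB'\neq\emptyset$, which is evidently the intended reading given the figure and the remark about singleton pairs. More generally, $2a_i>c$ suffices.
\begin{itemize}
\item With $\BB'\neq\emptyset$ your proof goes through. Your refinement step actually shows $\BB''=\BB'$, so for any $b\in\BB'$ the component containing $b$ is $(\{a_i\},\BB')$ in $\PP$ but $(\{a_j\},\BB')$ in $\PP'$.
\item With $\BB'=\emptyset$ and $2a_i>c$ it still works. The fee shares $\Sum(\AA_k)-\Sum(\BB_k)$ of the components are non-negative and sum to $c$, so $(\{a_i\},\emptyset)$ and $(\{a_j\},\emptyset)$ cannot both be components.
\end{itemize}

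\paragraph{Case~2 is not a literal mirror of case~1.} Refinement may split fee-only inputs off $\AA'$. So the component of $b_i$ is $(\AA'',\{b_i\})$ with $\AA''\subseteq\AA'$, possibly a proper subset. What saves you is $\Sum(\AA'')\ge b_i>0$, which forces $\AA''\neq\emptyset$. This is exactly why case~2 has no degenerate case and needs no extra hypothesis.
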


\begin{proof}
If $(\{a_i\}, \BB')$ is connectable and $a_i = a_j$, then by substituting the input, $(\{a_j\}, \BB')$ is also connectable. The existence of two distinct minimal partitions (one using $a_i$, the other using $a_j$) renders the transaction ambiguous by Definition~\ref{def:txType}. The proof for identical outputs is symmetric.
\end{proof}


This heuristic is computationally cheap: it requires only detecting value collisions among inputs and outputs that participate in singleton connectable pairs. Its high yield on timeout transactions stems from the prevalence of fixed-denomination UTXOs in mixer implementations.

\subsection{Heuristic 4: Knapsack Fallback and Ambiguity Propagation}

Despite optimizations, some transactions require solving a subset-sum problem. We employ a bounded knapsack solver to identify candidate connectable pairs. However, the solver's output is treated as evidence, not a definitive classification.

\begin{theorem}
\label{thm:heuristic4}
Let a connectable pair $(\AA', \BB')$ be discovered via knapsack search.
\begin{enumerate}
    \item If re-running the solver with a reduced fee $\hat{c} < c$ yields another distinct pair $(\AA'', \BB')$ for the same $\BB'$, the transaction is \textit{ambiguous}.
    \item If $\exists a_i \in \AA', a_j \notin \AA'$ with $a_i = a_j$, or symmetrically for outputs, the transaction is \textit{ambiguous}.
    \item If no connectable pair exists for any non-trivial subset of $\AA$ or $\BB$, the transaction is \textit{simple}.
\end{enumerate}
\end{theorem}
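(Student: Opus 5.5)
We treat the three items of Theorem~\ref{thm:heuristic4} separately. Each reduces to exhibiting two distinct minimal partitions, or to showing that no non-trivial acceptable partition exists. Throughout we use one standing fact, already invoked in the proof of Theorem~\ref{thm:heuristic1}: if $(\AA',\BB')$ is connectable, so is its complement $(\AA\setminus\AA',\BB\setminus\BB')$. This holds because subtracting the inequalities $\Sum(\BB')+c\geq\Sum(\AA')\geq\Sum(\BB')$ from the totals $\Sum(\AA)=\Sum(\BB)+c$ gives $\Sum(\BB\setminus\BB')+c\geq\Sum(\AA\setminus\AA')\geq\Sum(\BB\setminus\BB')$. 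Hence every non-trivial connectable pair yields an acceptable two-block partition.

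Given any acceptable partition, we refine each block until it is minimal; this terminates because the sets are finite. We will therefore prove ambiguity by producing two connectable pairs whose refinements must differ.

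\textbf{Item 1.} A pair found with the reduced fee $\hat c<c$ satisfies $\Sum(\BB')+\hat c\geq\Sum(\AA'')\geq\Sum(\BB')$, so it is connectable for the true fee $c$. We now have two connectable pairs $(\AA',\BB')$ and $(\AA'',\BB')$ with $\AA'\neq\AA''$. Each extends by its complement to an acceptable partition, and each refines to a minimal one. We then argue that these two minimal partitions differ. Refinements of the first keep the inputs of $\AA'$ on the $\BB'$ side, and refinements of the second keep the inputs of $\AA''$ there. So the block containing some element of $\AA'\,\triangle\,\AA''$ is matched to outputs of $\BB'$ in one partition and to outputs of $\BB\setminus\BB'$ in the other.

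\textbf{Item 2.} We argue as in Theorem~\ref{thm:heuristic3}: swapping $a_i$ and $a_j$ between $(\AA',\BB')$ and its complement preserves all sums. This produces a second acceptable partition that places $a_i$ and $a_j$ differently. After refinement, we compare the block containing $a_i$ in the two partitions to show they are distinct minimal partitions.

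\textbf{Item 3.} If no non-trivial subset of $\AA$ or $\BB$ lies in a connectable pair other than $(\AA,\BB)$ and $(\emptyset,\emptyset)$, then no non-trivial acceptable partition exists. By Definition~\ref{def:txType} the transaction is simple. The only care needed is to read ``non-trivial'' so that pairs such as $(\AA',\emptyset)$ are included. This relies on the solver search being exhaustive, which we state as a hypothesis.

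\textbf{Main obstacle.} The hardest step is making the distinctness argument in Items 1 and 2 rigorous, since refining two different acceptable partitions could in principle reach the same minimal partition. The difficulty is degenerate cases: zero-valued elements (excluded by $\Simplify$), or $\AA'\triangle\AA''$ being movable without changing the minimal blocks. If the direct argument fails, the fallback is to weaken the conclusion. We would then claim that the transaction is not uniquely separable and has at least two distinct acceptable untanglings, and that under the paper's convention this is recorded as ambiguous.
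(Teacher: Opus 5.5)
\textbf{Gap: the distinctness step in Items~1 and~2 fails, and the items are false as literally stated.} This is the step you flagged as the main obstacle. Your claim that ``the block containing some element of $\AA'\triangle\AA''$ is matched to outputs of $\BB'$ in one partition and to outputs of $\BB\setminus\BB'$ in the other'' tacitly assumes that this block contains an output. The degenerate case is not zero values. It is any input $x\le c$: such an $x$ forms a connectable pair $(\{x\},\emptyset)$ on its own, so both refinements can isolate $x$ as an output-free block and then coincide.

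Concretely, take inputs $10,20,1,1$ and outputs $10,19$, so $c=3$. The solver may return $(\AA',\BB')=(\{10,1\},\{10\})$. Rerunning with $\hat c=0$ returns $(\{10\},\{10\})$, so Item~1 fires. The two inputs of value $1$ (one inside $\AA'$, one outside) trigger Item~2. Yet every connectable pair that contains an input $1$ together with an output splits off $(\{1\},\emptyset)$. Hence the unique minimal partition is $\{(\{10\},\{10\}),(\{20\},\{19\}),(\{1\},\emptyset),(\{1\},\emptyset)\}$, and the transaction is separable, not ambiguous. Your swap argument for Item~2 fails on the same example.

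Your fallback does not rescue this. Every non-simple transaction already has two distinct acceptable partitions: the trivial $\{(\AA,\BB)\}$ and a non-trivial one. So ``two acceptable untanglings'' says nothing about ambiguity in the sense of Definition~\ref{def:txType}, and ``not uniquely separable'' is false in this example.

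\textbf{Fix: add a non-degeneracy hypothesis.} With either of the following, your argument goes through.
\begin{itemize}
\item Suppose some $x\in\AA'\triangle\AA''$ exceeds $c$ (for Item~2, the common value $a_i=a_j$). Then any connectable block $(\AA_k,\BB_k)$ containing $x$ satisfies $\Sum(\BB_k)\ge\Sum(\AA_k)-c\ge x-c>0$, so it contains an output. Its output set is nonempty and lies inside $\BB'$ in one refinement and inside $\BB\setminus\BB'$ in the other. Hence the two minimal partitions differ.
\item Alternatively, require the compared pairs to be minimal with $\BB'\neq\emptyset$. Then the block containing any $b\in\BB'$ is $(\AA',\BB')$ in one minimal partition and $(\AA'',\BB')$ (or the swapped pair) in the other.
\end{itemize}

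\textbf{Comparison with the paper.} The paper's own proof does not supply this step either. It delegates Items~1 and~2 to Lemmas~1 and~2 of the cited work and to Theorem~\ref{thm:heuristic3}, so a hypothesis of this kind would need to be made explicit there as well. Your Item~3 matches the paper's direct appeal to the definition.
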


\begin{proof}
The first condition satisfies Lemma~1 of \cite{Yanovich2016b} (two distinct input subsets funding the same output set). The second condition satisfies Heuristic~3, which is a special case of Lemma~2 of \cite{Yanovich2016b}. The third condition follows directly from Definition~\ref{def:txType}: a transaction with no non-trivial acceptable partitions is simple.
\end{proof}

Unlike the preceding heuristics, which operate on structural templates, Heuristic~4 is a meta-heuristic that interprets the (in)completeness of solver results. It does not generate new partitions but rather deduces transaction type from the presence or absence of symmetries in the solver's output.

\section{Experimental pipeline}
\label{sec::pipepline}

\begin{figure}[h!]
    \includegraphics[width=0.5\linewidth]{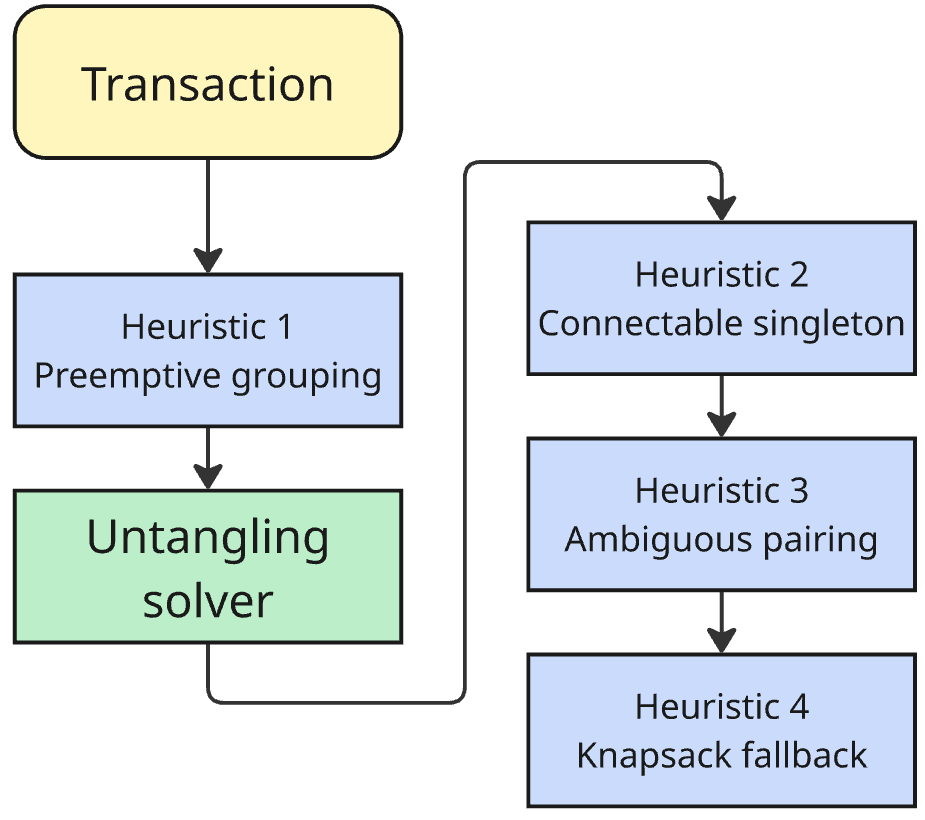}
    \centering
    \caption{Pipeline of heuristic application for transaction untangling}
    \label{figure:Pipeline}
\end{figure}

This section details the heuristic application pipeline, with execution order visualized in Figure~\ref{figure:Pipeline}.

Heuristic 1 is applied first due to its computational efficiency and composability with both the base untangling algorithm and all subsequent heuristics. After this heuristic is applied, the untangling solver is used to classify transactions.

Heuristics 2 and 3 both require iterative subset enumeration to identify connectable transaction pairs. Although optimized to exclude trivially small or excessively large subsets, these operations remain computationally intensive. Heuristic 3 further compounds this cost by requiring multiple iteration cycles, justifying their sequential execution after Heuristic 2.

Heuristic 4 incorporates a knapsack-solving subroutine whose complexity dominates runtime performance, as quantified in Table~\ref{tab:Heuristic times}. Consequently, it is reserved for terminal execution in the pipeline to avoid unnecessary computation on cases resolved by earlier heuristics.

\section{Numerical experiments}
\label{sec::Experiments}

Numerical experiments were conducted on the complete historical Bitcoin blockchain dataset, spanning from genesis block to block 882,421 (timestamped February 6, 2025). 

Transaction classification employed the untangling algorithm described in \cite{Larionov2023}, configured with a 1-second time limit per transaction. Our heuristic implementation utilized Python 3.7 with the optimized knapsack solver from package \cite{Knapsack}. All computations were executed on a workstation equipped with an AMD Ryzen 7 6800H processor (4.8 GHz base clock, 16 GB RAM). The sample source code of the workflow, is available on Github \cite{Englekh2026}.

\begin{figure}[h!]
    \includegraphics[width=0.6\linewidth]{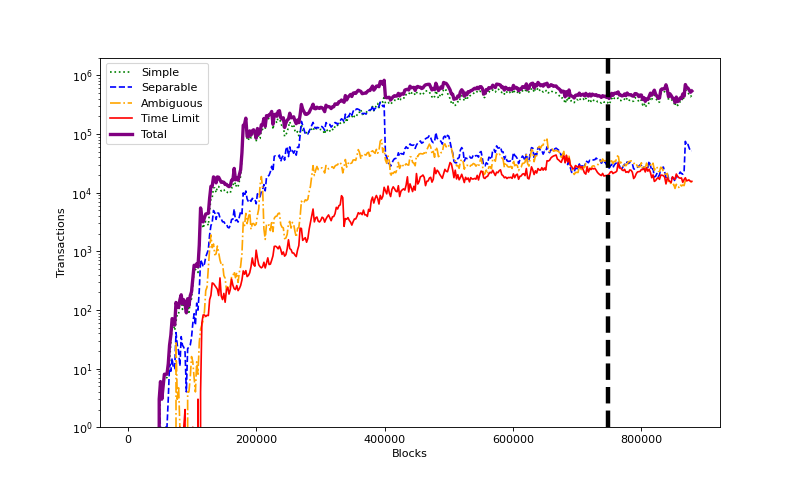}
    \centering
    \caption{Initial transaction classification results on the full dataset.}
    \label{figure:total}
\end{figure}

Figure~\ref{figure:total} shows the results of initial untangling of the Bitcoin dataset. The vertical line indicates the point at which results were reported in previous research \cite{Larionov2023}. Our results support the previous findings, showing a general trend of decreasing separable transactions over time, although a recent increase in their share is observable alongside ambiguous and time limit transactions.

This configuration yielded 5,023,471 transactions exceeding the 1-second timeout threshold. From this corpus, we selected a representative random subset of 50,000 transactions (approximately 1\% of all timeout-exceeding transactions) for detailed heuristic evaluation.

\begin{figure}[h!]
    \includegraphics[width=0.6\linewidth]{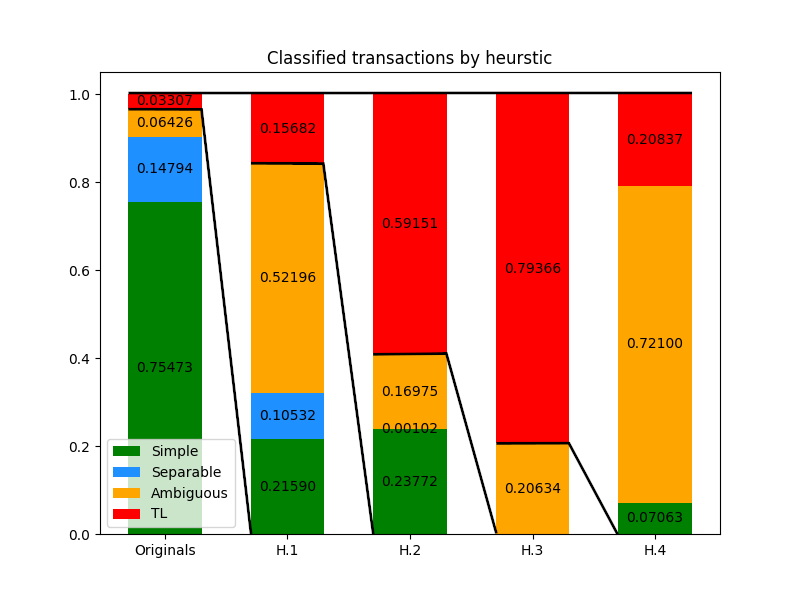}
    \centering
    \caption{Classification results of timeout transactions after heuristic pipeline.}
    \label{figure:Bars}
\end{figure}

Figure~\ref{figure:Bars} presents the classification distribution of timeout-exceeding transactions after applying the pipeline. Heuristic~1 resolves 84\% of initially ambiguous transactions. Heuristic~2 resolves 40\% of the remaining cases, while Heuristic~3 addresses 20\% of the subsequent remainder. Finally, Heuristic~4 resolves 80\% of the persistently ambiguous transactions.

Among the heuristics, only Heuristic 1 produced a considerable fraction of separable transactions (10.5\%). As anticipated for complex transactions with multiple inputs and outputs, most classified transactions were ambiguous, with each heuristic producing at least 16.9\% of the transactions (Heuristic 2). Heuristic 3 only found ambiguous transactions, as intended, while the others also identified simple transactions.

In the end, all heuristics together classified transactions as follows: the majority (62\%) were \textit{ambiguous}, 26\% were \textit{simple}, and 11\% were \textit{separable}--indicating uniquely resolvable transaction structures. This distribution contrasts sharply with the full dataset \cite{Larionov2023}, where 70\% are simple, 19\% separable, and only 9\% ambiguous, confirming that timeout-exceeding cases disproportionately contain structurally complex transactions.

\begin{table}[h]
    \centering
    \caption{Average heuristic computation time per transaction.}
    \label{tab:Heuristic times}
    \begin{tabular}{lc}
        \hline
        Heuristic & Avg. time (s) \\
        \hline
        1. Preemptive grouping & 0.01 \\
        2. Connectable singleton & 2.18 \\
        3. Ambiguous pairing & 45.91 \\
        4. Knapsack fallback & 85.95 \\
        \hline
    \end{tabular}
\end{table}

Table~\ref{tab:Heuristic times} validates our pipeline ordering through execution timing measurements. The data demonstrates strictly increasing computational complexity across the heuristic sequence, justifying the placement of resource-intensive operations later in the pipeline to maximize early-stage pruning efficiency.

The effectiveness of this design is underscored by comparison with brute-force approaches. In prior work \cite{Larionov2023}, increasing the computational budget for the base untangling algorithm from 1 to 300 seconds resolved only 25\% of previously timeout transactions, leaving 75\% still unclassified. By contrast, our heuristic pipeline resolves 98.5\% of timeout transactions with an effective average computation time of just 11.4 seconds per transaction--more than 25 times faster and achieving a 50-fold reduction in the unclassified rate. This improvement stems from early exit: 84\% of transactions are resolved by Heuristic~1 in 0.01 seconds, and only 7.7\% ever reach the expensive knapsack stage. The results demonstrate that structural heuristics, rather than raw computational power, are the key to untangling complex SSM transactions.

To assess the statistical reliability of our experimental results, we compute confidence intervals for the proportion of unresolved transactions. With \(n = 50\,000\) randomly sampled timeout transactions and an observed unresolved rate of \(1.53\%\), the standard error for this Bernoulli proportion is \(\sqrt{0.0153 \times 0.9847 / 50000} \approx 0.000549\). This yields a 95\% confidence interval of \(1.53\% \pm 1.96 \times 0.0549\% = [1.42\%,\; 1.64\%]\). The narrow interval width (\(0.22\%\)) demonstrates that our sample provides precise population estimates, confirming that the \(98.47\%\) resolution rate achieved by our heuristics is statistically robust and generalizable to the full set of timeout transactions.

We also validated the sampling strategy by benchmarking distributional statistics between the random sample and the full transaction set. Central tendency metrics showed near-perfect concordance: input medians were 37 (sample) versus 36 (full set), output medians were identical at 5 for both groups, and the median sum of inputs and outputs was exactly 71 in both datasets. Maximum values, while lower in the sample (inputs: 1,459 vs. 4,925; outputs: 3,001 vs. 10,001; sum: 3,006 vs. 10,004), represent the expected attenuation of extreme values inherent to random subsampling.  Given the robust agreement in medians--the primary indicator of typical transaction structure--we conclude the sample is representative of the full population for analytical purposes.

\section{Conclusion}
\label{sec:conclusion}

Prior work established a practical algorithm for SSM untangling, successfully classifying 98.6\% of all SSM transactions and leaving only 1.4\% unresolved due to computational time limits \cite{Larionov2023}. This paper complements that foundation by introducing a suite of heuristics specifically designed to address these hard residual cases. Applied to timeout transactions, our approach resolves 98.5\% of them, reducing the global unclassified SSM transaction rate from 1.4\% to 0.021\% of all SSM transactions--a 67-fold reduction.

Our analysis reveals a critical structural divergence: while non-timeout transactions are predominantly \textit{simple} (70\%) with minimal ambiguity (9\%) \cite{Larionov2023}, timeout cases exhibit inverse characteristics--62\% ambiguous versus 26\% simple. This confirms that computational complexity arises precisely where transactions permit multiple valid untanglings, validating our focus on ambiguous cases.

Though computationally intensive, the proposed heuristics provide essential instrumentation for resolving transactions intractable to brute-force methods. Notably, Heuristic~1 offers standalone value: its low resource requirements (0.01 s average runtime) enable integration into any transaction processing pipeline for complexity reduction without significant overhead.

These results establish that strategically designed heuristics can overcome computational barriers in 98.5\% of complex untangling cases. Nevertheless, several important directions remain for future investigation.

First, the residual 1.5\% of timeout transactions--those that resist all four heuristics even after pipeline application--warrant dedicated study. Manual inspection suggests these cases often involve intricate interactions between multiple structural conditions or near-equality constraints that narrowly evade our knapsack-based detection. Developing specialized techniques for this hard core, perhaps combining machine learning for pattern recognition with more sophisticated combinatorial optimization, could further reduce the unclassified rate toward zero.

Second, the applicability of our heuristic framework extends naturally beyond Bitcoin to the broader family of UTXO-based cryptocurrencies. Bitcoin Cash and Litecoin, as direct Bitcoin derivatives, share identical transaction semantics and thus can leverage our implementation without modification. More ambitiously, adapting these heuristics to Cardano's Extended UTXO (EUTXO) model presents both challenges and opportunities. Our prior work on Cardano \cite{Chegenizadeh2025} established that SSM untangling in the multi-asset, staking-enabled EUTXO setting is NP-complete and formulated a vectorized generalization of the original algorithm. Integrating the heuristic principles introduced here--preemptive grouping, singleton reduction, ambiguity detection via value collisions, and knapsack fallback--into the EUTXO framework could yield similar efficiency gains for Cardano's most complex transactions. This cross-chain generalization would represent a significant step toward a unified transaction analysis toolkit for all UTXO-based blockchains.

Third, the relationship between untangling and address clustering heuristics merits deeper exploration. Prior work \cite{Larionov2024} demonstrated that applying Common Spending and One-Time Change heuristics to untangled subtransactions can reveal previously hidden address linkages. Our heuristics, by resolving vastly more timeout transactions, unlock a substantially larger corpus of separable transactions for such analysis. A systematic study of how untangling-induced clustering affects downstream forensic applications--including entity identification, money flow tracing, and risk scoring--could quantify the practical investigative value of the resolution rates we report.

Finally, the ambiguous transactions that constitute the majority of our resolved timeout cases present a more subtle opportunity. While ambiguity precludes unique untangling, it does not render a transaction entirely uninformative. The set of all valid minimal partitions defines a distribution of possible flow decompositions. Analyzing this distribution--for instance, identifying which input-output pairings are present in all valid partitions versus those that vary--can yield probabilistic flow information even when deterministic reconstruction is impossible. Developing methods to extract such partial structural insights from ambiguous transactions and integrate them into forensic workflows remains an open and promising research direction.

Collectively, these directions point toward a comprehensive framework for UTXO transaction analysis that combines algorithmic untangling, heuristic acceleration, cross-chain generalization, and probabilistic flow inference. The 67-fold reduction in unclassified transactions achieved in this paper demonstrates that even NP-hard problems can yield to well-crafted heuristics when guided by empirical observation of real-world transaction patterns.

\bibliographystyle{splncs04}
\bibliography{refs2025}

\end{document}